\documentclass[conference,10pt]{IEEEtran}
\IEEEoverridecommandlockouts

\usepackage{cite}
\usepackage{graphicx}
\usepackage{array}
\usepackage{mdwmath}
\usepackage{amssymb}
\usepackage{mdwtab}
\usepackage{stfloats}
\usepackage{amsmath,amsthm}
\usepackage{threeparttable}
\usepackage{color}
\usepackage{bm}
\usepackage{url}
\usepackage{algpseudocode}
\usepackage{algorithm}
\usepackage{multicol}
\usepackage{epstopdf}
\usepackage{flushend}
\usepackage[caption=false,font=footnotesize]{subfig}

\algrenewcommand{\algorithmicrequire}{\textbf{Input:}}
\algrenewcommand{\algorithmicensure}{\textbf{Output:}}

\newtheorem{theorem}{Theorem}
\newtheorem{remark}{Remark}
\newtheorem{lemma}{Lemma}

\begin{document}

\title{UAV-Enabled Short-Packet Communication via Fluid Antenna Systems}

\author{
\IEEEauthorblockN{
Xusheng Zhu\IEEEauthorrefmark{1}, 
Kai-Kit Wong\IEEEauthorrefmark{1}\IEEEauthorrefmark{2}, 
Hanjiang Hong\IEEEauthorrefmark{1}, 
Han Xiao\IEEEauthorrefmark{3}, 
Hao Xu\IEEEauthorrefmark{4}, 
Tuo Wu\IEEEauthorrefmark{5}, 
and 
Chan-Byoung Chae\IEEEauthorrefmark{2}
}

\IEEEauthorblockA{\IEEEauthorrefmark{1}Department of Electronic and Electrical Engineering, University College London, London, UK}
\IEEEauthorblockA{\IEEEauthorrefmark{2}Yonsei Frontier Lab \& School of Integrated Technology, Yonsei University, Seoul, Korea}
\IEEEauthorblockA{\IEEEauthorrefmark{3}School of Information and Communications
Engineering, Xi’an Jiaotong University, Xi’an, China.}
\IEEEauthorblockA{\IEEEauthorrefmark{4}National Mobile Communications Research Laboratory, Southeast University, Nanjing, China}
\IEEEauthorblockA{\IEEEauthorrefmark{5}Department of Electronic Engineering, City University of Hong Kong, Hong Kong
}
\thanks{This work was supported in part by the EPSRC under Grant EP/W026813/1; in part by NSFC Grant 624B2094; in part by the Outstanding Doctoral Graduates Development Scholarship of Shanghai Jiao Tong University.}
}

\maketitle

\begin{abstract}
This paper develops a framework for analyzing UAV-enabled short-packet communication, leveraging fluid antenna system (FAS)-assisted relaying networks. Operating in the short-packet regime and focusing on challenging urban environments, we derive novel, closed-form expressions for the block error rate (BLER). This is achieved by modeling the spatially correlated Nakagami-$m$ fading link via a tractable eigenvalue-based approach. A high-signal-to-noise ratio (SNR) asymptotic analysis is also presented, revealing the system's fundamental diversity order. Building on this analysis, we formulate a novel energy efficiency (EE) maximization problem that, unlike idealized models, uniquely incorporates the non-trivial time and energy overheads of FAS port selection. An efficient hierarchical algorithm is proposed to jointly optimize key system parameters. Numerical results validate our analysis, demonstrating that while FAS provides substantial power gains, the operational overhead creates a critical trade-off. This trade-off dictates an optimal number of FAS ports and a non-trivial optimal UAV deployment altitude, governed by the balance between blockage and path loss. This work provides key insights for FAS-aided UAV communications.
\end{abstract}

\begin{IEEEkeywords}
Fluid antenna system (FAS), UAV, finite blocklength, energy efficiency, Nakagami-\textit{m} fading, urban scenarios.
\end{IEEEkeywords}

\section{Introduction}
The evolution toward sixth-generation (6G) wireless networks is driven by the imperative to support ultra-reliable low-latency communications (URLLC) for mission-critical services~\cite{saad2019vision}. To meet stringent latency requirements, URLLC relies on short-packet communication, operating in the short-packet regime~\cite{she2017radio}. This renders classical Shannon capacity theorems inadequate, as performance is now governed by the trade-off between coding rate, blocklength, and a non-negligible error probability~\cite{polyanskiy2010channel}. Concurrently, unmanned aerial vehicles (UAVs) have emerged as a pivotal technology for enhancing 6G reliability and coverage~\cite{wu2021overview}. However, a key challenge persists in providing reliable links to ground users with strict size, weight, and power (SWaP) constraints, where achieving spatial diversity with conventional multiple-antenna technology is often impractical~\cite{yuan2022per}.

To this end, the fluid antenna system (FAS) has been introduced as a transformative paradigm capable of harvesting spatial diversity within a compact form factor~\cite{new2025a}. Compared to traditional fixed-position antenna (FPA), FAS exploits spatial diversity using a single radio frequency (RF) chain by reconfiguring the antenna position within a specified region. Seminal work in \cite{wong2021fluid} established its theoretical foundations. Subsequent research has expanded its applications in diverse domains, including physical layer security~\cite{wong2020pel}, multiple access schemes~\cite{wong2022fama}, and massive connectivity~\cite{wong2024compact}. Consequently, a body of research has begun to explore the integration of FAS with UAVs, investigating its use in various network architectures and for interference management~\cite{cha2025uav, li2025radi, xu2025fluid}.

Despite these advancements, the application of FAS to URLLC-enabled UAV missions remains nascent. A critical limitation of prior FAS-UAV works~\cite{cha2025uav, li2025radi, xu2025fluid} is their reliance on the infinite blocklength assumption, using classical metrics like ergodic capacity that are fundamentally misaligned with the short-packet nature of URLLC. 
Against this background, this paper is motivated by a convergence of critical research gaps: 1) the lack of a unified analytical framework for FAS-UAV systems under short-packet theory; 2) the reliance on oversimplified channel models; and 3) the universal neglect of practical FAS operational overheads, which are non-trivial in latency-sensitive and power-constrained UAVs. To address these limitations, this work establishes a holistic framework for the joint analysis and optimization of FAS-enabled UAV relaying systems deployed in urban scenarios.
The main contributions are:
\begin{itemize}
\item We derive novel closed-form block error rate (BLER) expressions for FAS-UAV systems in the FBL regime, capturing correlated Nakagami-$m$ fading for urban scenarios using a tractable eigenvalue-based model.
\item We derive high-signal-to-noise ratio (SNR) asymptotic BLER expressions, quantifying the system's diversity order and proving the existence of a first-hop-bottlenecked error floor.
\item We formulate a novel energy efficiency (EE) maximization problem that integrates the practical time and energy overheads of FAS port selection, and propose an efficient hierarchical algorithm to solve the resulting non-convex mixed-integer problem.
\item We reveal critical design insights, including the existence of an optimal finite number of FAS ports and an optimal UAV deployment altitude governed by the balance between blockage and path loss.
\end{itemize}

\section{System Model}\label{sec:system_model}
We consider a three-node UAV-aided downlink relaying system, as illustrated in Fig.~\ref{fig:frame}. The system comprises a base station (BS), a half-duplex decode-and-forward (DF) UAV relay, and a FAS-equipped user equipment (UE) operating in a challenging urban environment.

\subsection{System Geometry and Probabilistic Path Loss}
We consider a 3-D Cartesian coordinate system where the BS and UE are at fixed locations, and the UAV flies on a circular trajectory of radius $r$ at altitude $Z_{\rm U}$. The slant ranges for the BS–UAV and UAV–UE links, $d_1(\theta)$ and $d_2(\theta)$, are respectively given by
\begin{small}
\begin{align}
d_1(\theta)&\!=\!\sqrt{(r\cos\theta\!-\!X_{\rm B})^2 \!+\! (r\sin\theta\!-\!Y_{\rm B})^2 \!+\! (Z_{\rm U}\!-\!Z_{\rm B})^2},\\
d_2(\theta)&\!=\!\sqrt{(r\cos\theta\!-\!X_{\rm E})^2 \!+\! (r\sin\theta\!-\!Y_{\rm E})^2 \!+\! (Z_{\rm U}\!-\!Z_{\rm E})^2}.
\end{align}
\end{small}

\begin{figure}[t]
 $\,$ \centering
 $\,$ \includegraphics[width=0.9\columnwidth]{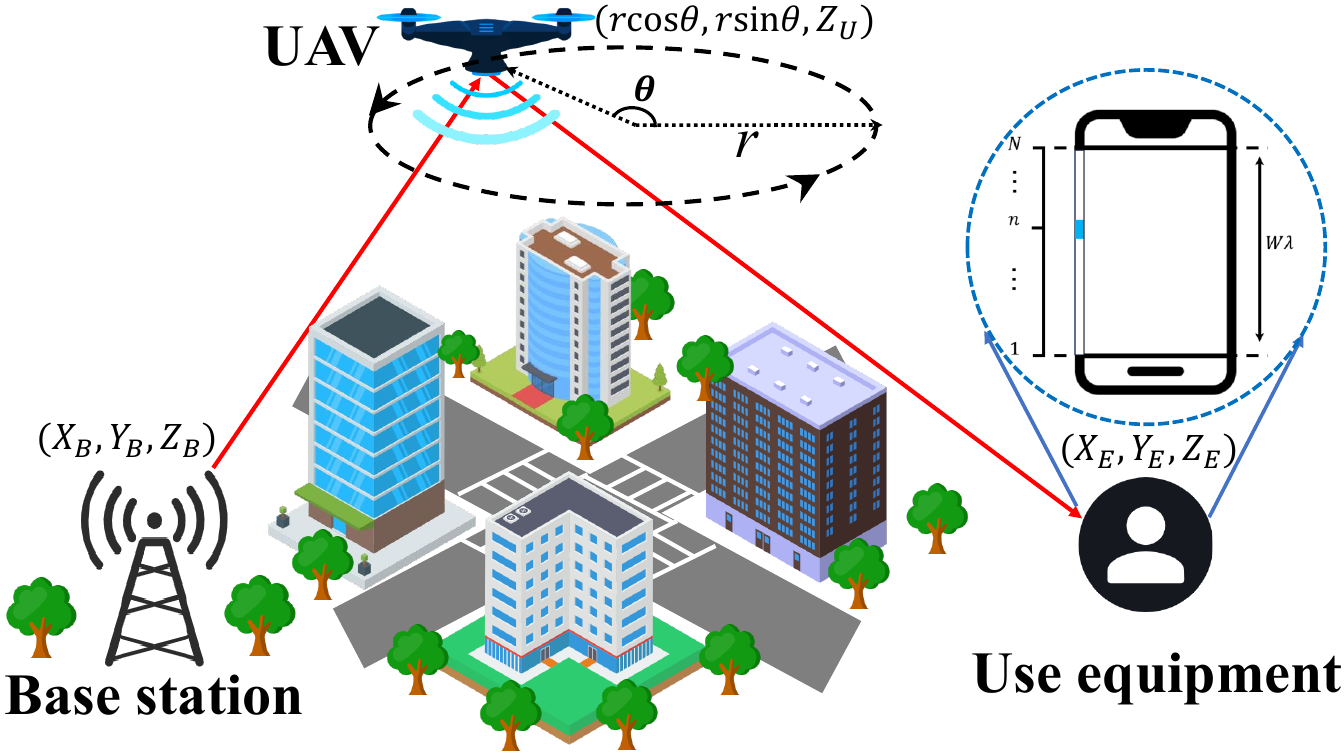} 
 $\,$ \caption{System model.}
 $\,$ \label{fig:frame}
\end{figure}
The urban scenario incorporates a probabilistic mixture of line-of-sight (LoS) and non-line-of-sight (NLoS) conditions, indexed by $k \in \{\rm LoS, NLoS\}$. The probability of an LoS condition for hop $i$ is modeled as a function of the elevation angle $\phi_i(\theta)$ as \cite{hour2014opt}
\begin{equation}
P^{\rm LoS}_{i}(\theta)=\frac{1}{1+a\exp\left(-b[\phi_i(\theta)-a]\right)},
\end{equation}
with $P^{\rm NLoS}_{i}(\theta)=1-P^{\rm LoS}_{i}(\theta)$. The elevation angle is $\phi_i(\theta)=\frac{180}{\pi}\arcsin(\Delta Z_i/d_i(\theta))$, and for a dense urban environment, $a=12.08$ and $b=0.11$.
The large-scale fading coefficient for a link of type $k$ is $\beta_{i}^{k} = (c/4\pi f_c d_i(\theta))^2 10^{-\eta_{i}^{k}/10}$, where $\eta_{i}^{k}$ is the excess path loss.

\subsection{Channel and Signal-to-Noise Ratio (SNR) Model}
We analyze the instantaneous SNR for each hop, which forms the basis of our performance analysis.

\subsubsection{First Hop (BS-to-UAV)}
The first-hop channel $g$ is subject to Nakagami-$m$ fading, where its power $|g|^2$ follows a Gamma distribution with parameter $m_1$. The instantaneous SNR at the UAV for link type $k$ is
\begin{equation}
\gamma_1^k = \frac{P_1 \beta_1^k(\theta)}{\sigma^2} |g^k|^2 \triangleq \bar{\gamma}_1^k(\theta) |g^k|^2,
\end{equation}
where $P_1$ is the BS transmit power, $\sigma^2$ is the additive white Gaussian noise (AWGN) power, and $\bar{\gamma}_1^k(\theta)$ is the average SNR.

\subsubsection{Second Hop (UAV-to-UE)}
The second-hop channel $\mathbf{h} \in \mathbb{C}^{N \times 1}$ terminates at the $N$-port FAS. It is spatially correlated, modeled as $\mathbf{h} = \mathbf{U}\mathbf{\Lambda}^{1/2}\mathbf{g}$, where $\mathbf{J} = \mathbf{U}\mathbf{\Lambda}\mathbf{U}^H$ is the Jakes' correlation matrix $J_{m,n} = J_0( 2\pi W(m-n)/(N-1) )$. The vector $\mathbf{g}$ contains i.i.d. components, where each $|g_n|^2$ is Gamma distributed with parameter $m_2^k$.

A direct analysis of the selected gain $\max\{|h_n|^2\}$ is mathematically intractable. To ensure tractability, we adopt a widely-used model based on the $N_{\text{eff}} \triangleq \text{rank}(\mathbf{J})$ effective independent diversity branches \cite{zhu2025fas}. The analytical model for the selected channel power gain is defined as
\begin{equation}\label{eq:analytical_model_def}
 |h_{\rm FAS}|^2 \triangleq \max\{\lambda_1|g_1|^2, \dots, \lambda_{N_{\mathrm{eff}}}|g_{N_{\mathrm{eff}}}|^2\}.
\end{equation}
The instantaneous SNR at the UE, given transmit power $P_2$, is
\begin{equation}
\gamma_{2}^{k} = \frac{P_2 \beta_2^k(\theta)}{\sigma^2} |h_{\rm FAS}^{k}|^2.
\end{equation}
For analytical convenience in the following sections, we normalize this expression as $\gamma_{2}^{k} = \overline{\gamma}_{2}^{k}(\theta) \frac{|h_{\rm FAS}^{k}|^{2}}{\sum_{n=1}^{N_{\rm eff}}\lambda_{n}}$, where the average SNR is defined as $\overline{\gamma}_{2}^{k}(\theta) = \frac{P_2 \beta_2^k(\theta) \sum_{n=1}^{N_{\rm eff}} \lambda_n}{\sigma^2}$.

\section{Performance Analysis}
For a given finite block length $L$ and packet size $B$, the coding rate $R=B/L$ is approximated as \cite{polyanskiy2010channel}
\begin{equation}\label{Rnx}
R(L,\epsilon) \approx C(\gamma)-\sqrt{\frac{Z(\gamma)}{L}}Q^{-1}(\epsilon),
\end{equation}
where $C(\gamma)=\log_2(1+\gamma)$ is the Shannon capacity, $Z(\gamma)=\left(1-\frac{1}{(1+\gamma)^2}\right)(\log_2e)^2$ is the channel dispersion, and $\epsilon$ is the error probability.

\subsection{Block Error Rate (BLER) Analysis}
Under the urban scenario, the instantaneous BLER $\epsilon_i^k$ for link type $k \in \{\mathrm{LoS}, \mathrm{NLoS}\}$ is
\begin{equation}
\epsilon_i^k\approx Q\left(\frac{C(\gamma^k_i)-R}{\sqrt{Z(\gamma^k_i)/L}}\right).
\end{equation}
The average BLER $\bar{\epsilon}_i^k(\theta)$ is the expectation of $\epsilon_i^k$. To render this integral tractable, we employ a highly accurate piecewise linear approximation \cite{yuan2022per} within a defined SNR range $[\rho_L, \rho_H]$ as
\begin{equation}
Q\left(\frac{C(\gamma)-R}{\sqrt{Z(\gamma)/L}}\right)
\approx
\begin{cases} 
1, & \gamma \leq \rho_L \\
\frac{1}{2} - \chi(\gamma - \tau), & \rho_L < \gamma < \rho_H \\
0, & \gamma \geq \rho_H 
\end{cases}
\end{equation}
where $\chi = 1/\sqrt{2\pi(2^{R}-1)/L}$, $\tau=2^{R}-1$, $\rho_L = \tau-1/(2\chi)$, and $\rho_H=\tau+1/(2\chi)$.
This simplifies the average BLER integral to
\begin{equation}\label{epaprir_urban}
\bar{\epsilon}^{k}_i(\theta)\approx
\chi\int_{\rho_L}^{\rho_H}
F_{\gamma_i^{k}}(x)dx,
\end{equation}
where $F_{\gamma_i^{k}}(x)$ is the cumulative distribution function (CDF) of $\gamma_i^k$.

\begin{lemma}\label{lema1_urban}
The first-hop SNR CDF for link type $k$ is
\begin{equation}
F_{\gamma_{1}^{k}}(x)=1-e^{-x\vartheta_{1}^{k}}\sum\nolimits_{j=0}^{m_1-1}\frac{(x\vartheta_{1}^{k})^{j}}{j!},
\end{equation}
where $\vartheta_1^k(\theta) \triangleq m_1 / \bar{\gamma}_1^k(\theta)$.
\end{lemma}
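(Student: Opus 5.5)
The plan is to read off the distribution of $\gamma_1^k$ from the Nakagami-$m$ model in the system model and then evaluate the resulting lower incomplete Gamma function in closed form. Since $\gamma_1^k=\bar{\gamma}_1^k(\theta)|g^k|^2$ and $|g^k|^2$ is Gamma distributed with shape $m_1$ and unit mean (the standard Nakagami-$m$ normalization $\mathbb{E}|g^k|^2=1$, so that $\bar{\gamma}_1^k$ is indeed the average SNR), $\gamma_1^k$ is Gamma distributed with shape $m_1$ and scale $\bar{\gamma}_1^k/m_1$. Its PDF is
\begin{equation*}
f_{\gamma_1^k}(y)=\frac{(\vartheta_1^k)^{m_1}}{\Gamma(m_1)}\,y^{m_1-1}e^{-\vartheta_1^k y},\qquad y\ge 0,
\end{equation*}
with $\vartheta_1^k=m_1/\bar{\gamma}_1^k(\theta)$. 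This is just a change of variables: scaling a Gamma variable by a positive constant rescales its rate.

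Next, write the CDF as $F_{\gamma_1^k}(x)=\int_0^x f_{\gamma_1^k}(y)\,dy$. The substitution $t=\vartheta_1^k y$ turns this into the regularized lower incomplete Gamma function $\gamma(m_1,x\vartheta_1^k)/\Gamma(m_1)$. For integer $m_1$, which is implicit in the statement since the sum runs to $m_1-1$, I would use the finite-sum identity
\begin{equation*}
\frac{\Gamma(m_1,z)}{\Gamma(m_1)}=e^{-z}\sum_{j=0}^{m_1-1}\frac{z^j}{j!}.
\end{equation*}
This identity follows from repeated integration by parts of $\int_z^\infty t^{m_1-1}e^{-t}dt$, each step lowering the power by one and producing one term of the sum. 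It can also be proved by induction on $m_1$. Combining it with $\gamma(m_1,z)+\Gamma(m_1,z)=\Gamma(m_1)$ and setting $z=x\vartheta_1^k$ gives the stated expression.

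The only real point of care is the integer-$m_1$ assumption and the unit-mean normalization of $|g^k|^2$. Both are implicit in the lemma's form rather than explicit in the system model, so I would state them at the outset. The rest is a textbook calculation with no genuine obstacle.
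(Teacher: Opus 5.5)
Your proposal is correct and follows essentially the same route as the paper's proof in Appendix A: both write the CDF as the regularized lower incomplete Gamma function $\gamma(m_1, x\vartheta_1^k)/\Gamma(m_1)$ and then apply the integer-$m_1$ finite-sum identity, under the same unit-mean ($\Omega=1$) normalization. The differences are cosmetic: you rescale the Gamma variable before integrating, while the paper integrates the unit-mean CDF first and then substitutes $y = x/\bar{\gamma}_1^k(\theta)$; you also justify the finite-sum identity by integration by parts, which the paper simply asserts.
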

\emph{Proof:}
Please see Appendix \ref{plema1_urban_proof}. \hfill$\blacksquare$

\begin{lemma}\label{lema2_urban}
The second-hop FAS SNR CDF for link type $k$ is
\begin{equation}\label{ga2kxl4}
F_{\gamma_{2}^{k}}(x)=\prod_{n=1}^{{N_{\rm eff}} }\left(1-e^{-\frac{x\vartheta_{2}^k}{\lambda_{n}}}\sum_{j=0}^{m_2^k-1}\frac{1}{j!}\left(\frac{x\vartheta_{2}^k}{\lambda_{n}}\right)^{j}\right),
\end{equation}
where $m_2^k$ is the Nakagami-m parameter and $\vartheta_{2}^{k}(\theta) \triangleq m_2^k \sum_{n=1}^N \lambda_n / \bar{\gamma}_2^k(\theta)$.
\end{lemma}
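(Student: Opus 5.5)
The plan has three steps: write $\gamma_2^k$ as a scaled maximum of the eigen-weighted branch gains, use independence to factor the CDF into a product over branches, and then apply the Gamma CDF of Lemma~\ref{lema1_urban} to each factor.

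\textbf{Step 1 (reduce to a maximum).} By the normalized model, $\gamma_2^k = \overline{\gamma}_2^k(\theta)\,|h_{\rm FAS}^k|^2/\sum_n\lambda_n$. Hence
\[
F_{\gamma_2^k}(x)=\Pr\!\left(|h_{\rm FAS}^k|^2\le \frac{x\sum_n\lambda_n}{\overline{\gamma}_2^k(\theta)}\right).
\]
Write $t\triangleq x\sum_n\lambda_n/\overline{\gamma}_2^k(\theta)$, so that $m_2^k t=x\vartheta_2^k$.

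\textbf{Step 2 (factor over branches).} Under the eigenvalue-based model \eqref{eq:analytical_model_def}, the event $\{\max_n \lambda_n|g_n|^2\le t\}$ is the intersection of the events $\{|g_n|^2\le t/\lambda_n\}$ for $n=1,\dots,N_{\rm eff}$. The $g_n$ are i.i.d., so the probability factors as
\[
F_{\gamma_2^k}(x)=\prod_{n=1}^{N_{\rm eff}}F_{|g_n|^2}(t/\lambda_n).
\]
Only the $N_{\rm eff}$ nonzero eigenvalues appear, so every $t/\lambda_n$ is finite.

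\textbf{Step 3 (Gamma CDF per branch).} Each $|g_n|^2$ is unit-mean Gamma with shape $m_2^k$ and rate $m_2^k$. For integer $m_2^k$, as in the proof of Lemma~\ref{lema1_urban} (integrating by parts, or using the regularized incomplete gamma function), this gives
\[
F(y)=1-e^{-m_2^k y}\sum_{j=0}^{m_2^k-1}\frac{(m_2^k y)^j}{j!}.
\]
Substituting $y=t/\lambda_n$ gives $m_2^k y = x\vartheta_2^k/\lambda_n$. The product of these factors is exactly \eqref{ga2kxl4}.

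\textbf{Expected obstacle.} The calculation itself is routine. The main care point is normalization bookkeeping: the average SNR $\overline{\gamma}_2^k$ is defined with $\sum_{n=1}^{N_{\rm eff}}\lambda_n$, while $\vartheta_2^k$ is written with $\sum_{n=1}^N\lambda_n$. These agree because $\mathrm{tr}(\mathbf{J})=N$ equals the sum of all eigenvalues, and the zero eigenvalues beyond $N_{\rm eff}$ contribute nothing. I would state this identification explicitly.

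A second point is that the factorization in Step 2 relies on the modelling assumption that the eigen-branches $\lambda_n|g_n|^2$ are independent. This holds by construction of the tractable model (i.i.d. $\mathbf{g}$), not as an exact property of $\max_n|h_n|^2$, and I would note this so the lemma is understood as a statement about the analytical model.
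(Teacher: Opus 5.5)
Your proposal is correct and follows essentially the same route as the paper's proof. Both factor the CDF of the maximum of the independent branches $\lambda_n|g_n|^2$ into a product of scaled Gamma CDFs, then rescale by $\sum_{n}\lambda_n/\overline{\gamma}_2^k(\theta)$ to reach the argument $x\vartheta_2^k/\lambda_n$; the paper factors first and rescales second, which makes no difference. Your explicit remark that $\sum_{n=1}^{N}\lambda_n=\sum_{n=1}^{N_{\rm eff}}\lambda_n$, because the remaining eigenvalues of $\mathbf{J}$ vanish, covers a point the paper passes over silently, and it is worth keeping.
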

\emph{Proof:} Please see Appendix \ref{plema2_urban_proof}. \hfill$\blacksquare$

\subsubsection{First-Hop Average BLER}
Substituting Lemma~\ref{lema1_urban} into \eqref{epaprir_urban}, the closed-form BLER component is
\begin{equation}\label{eq:urban_hop1_k}
\begin{aligned}
\overline{\epsilon}_{1}^{k}(\theta) \approx \chi \bigg[& (\rho_H - \rho_L) - \frac{1}{\vartheta_1^k} \sum_{j=0}^{m_1-1} \Big( e^{-\rho_L\vartheta_1^k}\sum_{l=0}^{j}\frac{(\rho_L\vartheta_1^k)^l}{l!} \\
&\hspace{1.5cm}- e^{-\rho_H\vartheta_1^k}\sum_{l=0}^{j}\frac{(\rho_H\vartheta_1^k)^l}{l!} \Big)\bigg].
\end{aligned}
\end{equation}
The total average BLER for the first hop is the expectation over the link probabilities as
\begin{equation}
\bar{\epsilon}_{1}^{\mathrm{US}}(\theta) = \bar{\epsilon}_{1}^{\mathrm{LoS}}(\theta)P_{1}^{\mathrm{LoS}}(\theta) + \bar{\epsilon}_{1}^{\mathrm{NLoS}}(\theta)P_{1}^{\mathrm{NLoS}}(\theta). \label{eq:urban_hop1_final}
\end{equation}

\subsubsection{Second-Hop Average BLER}
Similarly, substituting Lemma~\ref{lema2_urban} into \eqref{epaprir_urban} and solving via inclusion-exclusion yields
\begin{align}
\overline{\epsilon}_{2}^{k}(\theta) &\approx{} \chi(\rho_H - \rho_L) + \chi \sum_{\substack{\mathcal{S} \subseteq \{1, \dots, N_{\rm eff}\} \\ \mathcal{S} \neq \emptyset}} (-1)^{|\mathcal{S}|} \times \nonumber \\
& \sum_{a=0}^{|\mathcal{S}|(m_2^k-1)} c_a^k(\mathcal{S}) \left[ \mathcal{G}(\rho_H; a, b_{\mathcal{S}}^k) - \mathcal{G}(\rho_L; a, b_{\mathcal{S}}^k) \right], \label{eq:urban_hop2_k}
\end{align}
where $c_a^k(\mathcal{S})$, $b_{\mathcal{S}}^k$, and $\mathcal{G}(\cdot)$ are derived from the expansion of \eqref{ga2kxl4}.

\begin{theorem}\label{theo1_urban}
In the high-SNR regime, the average BLER of the second hop for link type $k$ is approximated by
\begin{equation}
\begin{aligned}
\bar{\epsilon}_2^k \approx& \frac{\chi \left( \rho_H^{m_2^k {N_{\rm eff}} + 1} - \rho_L^{m_2^k {N_{\rm eff}} + 1} \right)}{m_2^k {N_{\rm eff}} + 1} \\
&\times\left(\frac{(\vartheta_2^k)^{m_2^k}}{\Gamma(m_2^k+1)}\right)^{{N_{\rm eff}} } \prod_{n=1}^{{N_{\rm eff}} } \lambda_n^{-m_2^k}. 
\end{aligned}
\end{equation}
\end{theorem}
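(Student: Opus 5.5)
The plan is to start from the integral representation \eqref{epaprir_urban}, $\bar{\epsilon}_2^k \approx \chi\int_{\rho_L}^{\rho_H}F_{\gamma_2^k}(x)\,dx$. Into it we substitute a small-argument expansion of the CDF in Lemma~\ref{lema2_urban}. In the high-SNR regime $\bar{\gamma}_2^k(\theta)\to\infty$, so $\vartheta_2^k = m_2^k\sum_n\lambda_n/\bar{\gamma}_2^k \to 0$. On the fixed, bounded interval $[\rho_L,\rho_H]$, each argument $x\vartheta_2^k/\lambda_n$ is therefore uniformly small. This holds as long as every $\lambda_n>0$, which is guaranteed because we keep only the $N_{\rm eff}=\mathrm{rank}(\mathbf{J})$ nonzero eigenvalues.

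The key step is the factor-wise expansion. Write $y_n = x\vartheta_2^k/\lambda_n$. Each factor $1-e^{-y}\sum_{j=0}^{m-1}y^j/j!$ is the regularized lower incomplete gamma function $\gamma(m,y)/\Gamma(m)$, with $m=m_2^k$. I would expand it either through the series $e^{-y}\sum_{j\ge m} y^j/j!$ or by integrating the Gamma density $y^{m-1}e^{-y}/\Gamma(m)$ near zero. Both give
\begin{equation*}
1-e^{-y}\sum_{j=0}^{m-1}\frac{y^j}{j!} = \frac{y^{m}}{\Gamma(m+1)} + o(y^{m}).
\end{equation*}
Taking the product over $n=1,\dots,N_{\rm eff}$ then yields
\begin{equation*}
F_{\gamma_2^k}(x)\approx \left(\frac{(\vartheta_2^k)^{m_2^k}}{\Gamma(m_2^k+1)}\right)^{N_{\rm eff}}\prod_{n=1}^{N_{\rm eff}}\lambda_n^{-m_2^k}\, x^{m_2^k N_{\rm eff}}.
\end{equation*}
The relative error is $O(\vartheta_2^k)$, uniformly for $x\in[\rho_L,\rho_H]$.

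Finally we integrate. Using $\int_{\rho_L}^{\rho_H}x^{m_2^kN_{\rm eff}}\,dx = (\rho_H^{m_2^kN_{\rm eff}+1}-\rho_L^{m_2^kN_{\rm eff}+1})/(m_2^kN_{\rm eff}+1)$ and multiplying by $\chi$ gives exactly the stated expression.

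The only delicate point is justifying that the leading-order term may be integrated term by term, that is, that the neglected higher-order terms stay negligible after integration. Two facts settle this. The interval is compact and independent of the SNR. The remainder is bounded by $C\,(\vartheta_2^k)^{m_2^kN_{\rm eff}+1}$ uniformly on it, for instance via the bound $\gamma(m,y)/\Gamma(m)\le y^m/\Gamma(m+1)$ together with the next term of the series. Hence the error is of strictly higher order in $1/\bar{\gamma}_2^k$.

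One subtlety should be noted: $\rho_L$ may be negative, in which case the CDF actually vanishes for $x<0$. I would state the result under the paper's convention $\rho_L\ge 0$, or interpret it as a formal asymptotic expression with $\rho_L$ replaced by $\max\{\rho_L,0\}$. As a by-product, the result exhibits the diversity order $m_2^kN_{\rm eff}$.
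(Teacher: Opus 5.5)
Your proposal is correct and follows the same route as the paper's proof. You replace each factor $\gamma(m_2^k, x\vartheta_2^k/\lambda_n)/\Gamma(m_2^k)$ by its leading term $(x\vartheta_2^k/\lambda_n)^{m_2^k}/\Gamma(m_2^k+1)$, multiply to obtain $\mathcal{C}_k x^{m_2^k N_{\rm eff}}$, and integrate the monomial over $[\rho_L,\rho_H]$. Your additions are extra rigor the paper lacks: framing the limit as $\vartheta_2^k\to 0$ on a fixed compact interval (instead of its informal ``$x\to 0$''), the uniform $O\bigl((\vartheta_2^k)^{m_2^kN_{\rm eff}+1}\bigr)$ remainder bound, and the $\rho_L\ge 0$ caveat, which holds automatically for the paper's $B=80$.
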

\emph{Proof:} Please see Appendix \ref{proof_theo1_urban}. \hfill$\blacksquare$

\begin{remark}
Theorem \ref{theo1_urban} reveals that the diversity order of the FAS-enabled link is $m_2^k N_{\rm eff}$, jointly determined by the fading severity $m_2^k$ and the spatial degrees of freedom $N_{\rm eff}$.
\end{remark}

The total average BLER for the second hop is then
\begin{equation}
\bar{\epsilon}_{2}^{\mathrm{US}}(\theta) = \bar{\epsilon}_{2}^{\mathrm{LoS}}(\theta)P_{2}^{\mathrm{LoS}}(\theta) + \bar{\epsilon}_{2}^{\mathrm{NLoS}}(\theta)P_{2}^{\mathrm{NLoS}}(\theta). \label{eq:urban_hop2_final}
\end{equation}

\subsubsection{End-to-End BLER and Asymptotic Analysis}
The end-to-end BLER, $\bar{\epsilon}_{T}^{\mathrm{US}}(\theta)$, is obtained by combining the hop-level results as
\begin{equation}
\overline{\epsilon}_{T}^{\mathrm{US}}(\theta) 
= 1 - (1-\overline{\epsilon}_{1}^{\mathrm{US}}(\theta))(1-\overline{\epsilon}_{2}^{\mathrm{US}}(\theta)).
\end{equation}
The overall average BLER, $\bar{\epsilon}_{O}^{\mathrm{US}}$, is found by averaging over the UAV's trajectory as
\begin{equation}
\overline{\epsilon}_{O}^{\mathrm{US}} 
= \frac{1}{2\pi} \int_{0}^{2\pi} \overline{\epsilon}_{T}^{\mathrm{US}}(\theta) d\theta. 
\end{equation}
As this integral is intractable, we employ $M$-point Gauss-Chebyshev Quadrature (GCQ) as
\begin{equation}
\bar{\epsilon}_{O}^{\mathrm{US}}\;\approx\;\frac{\pi}{M}\sum_{m=1}^{M} w_m\, \bar{\epsilon}_{T}^{\mathrm{US}}(\theta_m),
\end{equation}
where the nodes $\theta_m$ and weights $w_m$ are given by
\begin{equation}\label{overbm_urban}
\begin{aligned}
 \theta_m = \pi \,x_m + \pi, \quad w_m = \frac{\pi}{M} \, \frac{1}{\sqrt{1-x_m^2}}, 
\end{aligned}
\end{equation}
with $x_m$ as the Chebyshev roots on $[-1, 1]$.

\begin{theorem}\label{theo2_urban_floor}
The end-to-end performance is limited by an error floor. For a fixed $P_1$ and $P_2 \to \infty$, the overall BLER converges to a floor determined solely by the first hop, given by
\begin{small}
\begin{align}
\lim_{P_2 \to \infty} \overline{\epsilon}_{O}^{\mathrm{US}} = \frac{1}{2\pi} \int_{0}^{2\pi} \overline{\epsilon}_{1}^{\mathrm{US}}(\theta) d\theta. \label{eq:corrected_floor} 
\end{align}
\end{small}%
\end{theorem}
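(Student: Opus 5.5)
The plan is to show that, for each fixed $\theta$ and each link type $k$, the second-hop average BLER $\bar{\epsilon}_2^k(\theta)$ vanishes as $P_2\to\infty$. The end-to-end expression then collapses to the first-hop term, and we pass the limit through the trajectory average. The first-hop quantities $\bar{\epsilon}_1^{k}(\theta)$ and $P_i^{k}(\theta)$ depend only on $P_1$ and the geometry. They are therefore unaffected by $P_2$.

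\textbf{Step 1: the second hop vanishes.} Because $\bar{\gamma}_2^k(\theta)=P_2\beta_2^k(\theta)\sum_n\lambda_n/\sigma^2$, we have $\vartheta_2^k(\theta)\to 0$ as $P_2\to\infty$. This holds for every $\theta$, since $d_2(\theta)$ is bounded and so $\beta_2^k(\theta)$ is bounded away from zero on the circle.

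\begin{itemize}
\item By Lemma~\ref{lema2_urban}, each factor of $F_{\gamma_2^k}(x)$ is the regularized lower incomplete gamma $P(m_2^k, x\vartheta_2^k/\lambda_n)$. It tends to $0$ uniformly for $x\in[\rho_L,\rho_H]$, since $\rho_L,\rho_H$ are finite constants independent of $P_2$.
\item Hence $\chi\int_{\rho_L}^{\rho_H}F_{\gamma_2^k}(x)\,dx\to 0$.
\item Alternatively, Theorem~\ref{theo1_urban} gives $\bar{\epsilon}_2^k=\mathcal{O}\big((\vartheta_2^k)^{m_2^kN_{\rm eff}}\big)\to0$.
\end{itemize}

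Combining with the probability mixture \eqref{eq:urban_hop2_final} and $P_2^{k}(\theta)\in[0,1]$ gives $\bar{\epsilon}_2^{\rm US}(\theta)\to 0$ pointwise in $\theta$.

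\textbf{Step 2: the end-to-end BLER.} Substituting into the end-to-end formula gives
\[
\bar{\epsilon}_T^{\rm US}(\theta)=\bar{\epsilon}_1^{\rm US}(\theta)+\bar{\epsilon}_2^{\rm US}(\theta)\big(1-\bar{\epsilon}_1^{\rm US}(\theta)\big)\to\bar{\epsilon}_1^{\rm US}(\theta).
\]

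\textbf{Step 3: interchanging limit and integral.} This is the only point needing care. I would use dominated convergence. The piecewise-linear approximation and the CDF bound give $0\le\bar{\epsilon}_i^{\rm US}(\theta)\le\chi(\rho_H-\rho_L)=1$, so $\bar{\epsilon}_T^{\rm US}(\theta)\le 1$. The constant $1$ is integrable on $[0,2\pi]$.

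For a stronger statement, uniform convergence also holds. Let $\beta_{\min}=\min_{\theta,k}\beta_2^k(\theta)>0$; this minimum exists by continuity on the compact circle. Since $F_{\gamma_2^k}$ is monotone decreasing in $\bar{\gamma}_2^k$, we get $\sup_\theta\bar{\epsilon}_2^{\rm US}(\theta)\to0$.

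Either route gives
\[
\lim_{P_2\to\infty}\bar{\epsilon}_O^{\rm US}=\frac{1}{2\pi}\int_0^{2\pi}\bar{\epsilon}_1^{\rm US}(\theta)\,d\theta,
\]
which is \eqref{eq:corrected_floor}. This limit is strictly positive whenever $P_1$ is finite, so it is a genuine floor.
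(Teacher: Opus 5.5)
Your proposal is correct and follows the same route as the paper's proof. As $P_2\to\infty$ the second-hop BLER $\overline{\epsilon}_{2}^{\mathrm{US}}(\theta)$ vanishes, so $\overline{\epsilon}_{T}^{\mathrm{US}}(\theta)=1-(1-\overline{\epsilon}_{1}^{\mathrm{US}}(\theta))(1-\overline{\epsilon}_{2}^{\mathrm{US}}(\theta))$ collapses to $\overline{\epsilon}_{1}^{\mathrm{US}}(\theta)$, and averaging over $\theta$ gives the floor. The paper only asserts that the second hop becomes error-free and that the limit passes through the average, whereas you justify both steps (via $\vartheta_2^k\to 0$ with $\rho_L,\rho_H$ fixed, and via dominated convergence using $\chi(\rho_H-\rho_L)=1$); this tightens the paper's argument rather than changing it.
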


\emph{Proof:}
As $P_2 \to \infty$, the second hop becomes error-free, $\overline{\epsilon}_{2}^{\mathrm{US}}(\theta) \to 0$. The end-to-end BLER $\overline{\epsilon}_{T}^{\mathrm{US}}(\theta)$ thus converges to $1 - (1 - \overline{\epsilon}_{1}^{\mathrm{US}}(\theta))(1 - 0) = \overline{\epsilon}_{1}^{\mathrm{US}}(\theta)$. Averaging over $\theta$ completes the proof.\hfill$\blacksquare$

\begin{remark}
Theorem \ref{theo2_urban_floor} reveals that the system performance is bottlenecked by the BS-to-UAV link. Increasing UAV power $P_2$ yields diminishing returns.
\end{remark}

\section{Energy Efficiency Optimization}
This section develops a framework to maximize the system EE, defined as successfully delivered bits per unit of energy (bits/Joule). We formulate a realistic EE model that captures the core trade-off: while increasing $N$ enhances diversity, it also incurs non-trivial time and energy overheads for port selection \cite{wong2021fluid}.

With $T_{\text{block}} = L/W_{\text{band}}$ as the total duration and $T_{\text{sw}}(N) = N\tau_p$ as the FAS selection overhead, the effective data transmission time is $T_{\text{tx}}(N) = T_{\text{block}} - T_{\text{sw}}(N)$. With transmit power $P_2$, static circuit power $P_c$, and FAS switching power $P_{\text{sw}}$, the total energy is $E_{\text{total}} = P_2 T_{\text{tx}}(N) + P_c T_{\text{block}} + P_{\text{sw}} T_{\text{sw}}(N)$. The EE is therefore
\begin{equation}\label{eq:ee_model}
 \text{EE} = \frac{B(1 - \bar{\epsilon}_O^{\mathrm{US}})}{P_2(T_{\text{block}} \!-\! N\tau_p)\! + \!P_c T_{\text{block}} + P_{\text{sw}}N\tau_p}.
\end{equation}
Our objective is to maximize the EE by jointly optimizing $L, Z_{\rm U}, P_2$, and $N$ as
\begin{subequations}
\label{eq:p1}
\begin{align}
\mathcal{P}1: \max_{L, Z_U, P_2, N} \quad & \text{EE}(L, Z_U, P_2, N) \\
\text{s.t.} \quad & \overline{\epsilon}_O^{\mathrm{US}}(L, Z_U, P_2, N) \le \epsilon_{\text{th}}, \label{eq:p1_c1} \\
& 0 < P_2 \le P_{\text{max}}, \label{eq:p1_c2} \\
& Z_{\text{min}} \le Z_U \le Z_{\text{max}}, \label{eq:p1_c3} \\
& L_{\text{min}} \le L \le L_{\text{max}}, \label{eq:p1_c4} \\
& N_{\text{min}} \le N \le N_{\text{max}}, \label{eq:p1_c5} \\
& N\tau_p < T_{\text{block}}. \label{eq:p1_c6}
\end{align}
\end{subequations}
Constraint \eqref{eq:p1_c1} ensures reliability, while \eqref{eq:p1_c2}--\eqref{eq:p1_c6} are operational bounds.

Problem $\mathcal{P}1$ is a non-convex mixed-integer nonlinear program (MINLP). We solve it using a hierarchical search.

\subsection{Transmit Power Minimization}
For a fixed $(\hat{L}, \hat{Z}_U, \hat{N})$, the minimum $P_2^*$ satisfying \eqref{eq:p1_c1} is found. This is a simple 1-D optimization given by
\begin{subequations}
\begin{align}
\mathcal{P}1.1: \quad \min_{P_2} \quad & P_2 \\
\mathrm{s.t.} \quad & \overline{\epsilon}_{O}(P_2; \hat{L}, \hat{Z}_U, \hat{N}) \le \epsilon_{\rm th}, \quad P_2 \le P_{\max}.
\end{align}
\end{subequations}
This is efficiently solved via a bisection search, leveraging the monotonicity of $\overline{\epsilon}_{O}$ in $P_2$.

\subsection{Optimal Port Number Determination}
For a fixed $(\hat{L}, \hat{Z}_U)$, we find the optimal $N^*$ that maximizes EE, given by
\begin{subequations}
\begin{align}
\mathcal{P}1.2: \max_{N} \quad & \text{EE}(N, P_2^*(N); \hat{L}, \hat{Z}_U) \\
\mathrm{s.t.} \quad & N \in [N_{\min}, N_{\max}], \quad N\tau_p < \hat{L}/W_{\text{band}},
\end{align}
\end{subequations}
where $P_2^*(N)$ is the solution from $\mathcal{P}1.1$. This is solved by a 1-D search over the integer $N$.

\subsection{Optimal Flying Height Determination}
For a fixed $\hat{L}$, the optimal altitude $Z_U^*$ is found by
\begin{subequations}
\begin{align}
\mathcal{P}1.3: \quad \max_{Z_U} \quad & \text{EE}^*(Z_U; \hat{L}) \\
\mathrm{s.t.} \quad & Z_U \in [Z_{\min}, Z_{\max}],
\end{align}
\end{subequations}
where $\text{EE}^*(Z_U; \hat{L})$ is the maximum EE obtained from solving $\mathcal{P}1.2$. This is also solved via a 1-D search.

The final solution is found by performing an exhaustive search over the discrete blocklength $L$ and solving $\mathcal{P}1.3$ for each $L$.

\begin{table}[t]
\caption{Simulation Parameters for Urban Scenario}
\label{tab:sim_params_urban}
\centering
\scriptsize
\setlength{\tabcolsep}{4pt} 
\renewcommand{\arraystretch}{1.12}

\resizebox{\columnwidth}{!}{%
\begin{tabular}{|l|c|c|}
\hline
\textbf{Parameter} & \textbf{Symbol} & \textbf{Urban Scenario} \\ \hline
Data bits per packet & $B$ & $80\ \text{bits}$ \\ \hline
Carrier frequency & $f_c$ & $2.5\ \text{GHz}$ \\ \hline
Noise power & $\sigma^{2}$ & $-100\ \text{dBm}$ \\ \hline
System bandwidth & $W_{\text{band}}$ & $10\ \text{MHz}$ \\ \hline
UAV flying radius & $r$ & $50\ \text{m}$ \\ \hline
Static circuit power &$P_c$ & $5\ \text{dBm}$ \\ \hline
FAS switching power & $P_{\rm sw}$ & $0\ \text{dBm}$ \\ \hline
Port processing time & $\tau_p$ & $2~\mu$s \\ \hline 
Aperture of length (default) & $W$ & $0.5\lambda$\\ \hline
BLER threshold & $\epsilon_{\rm th}$ & $10^{-3}$\\ \hline
BS coordinates & $(X_B, Y_B, Z_B)$ & $(100, 0, 40)\ \text{m}$ \\ \hline
UE coordinates & $(X_E, Y_E, Z_E)$ & $(-100, 100, 0)\ \text{m}$ \\ \hline
LoS probability constants & $(a,b)$ & $12.08,\;0.11$ \\ \hline
Additional LoS path loss & $\eta_{\text{LoS}}$ & $1.6\ \text{dB}$ \\ \hline
Additional NLoS path loss & $\eta_{\text{NLoS}}$ & $23\ \text{dB}$ \\ \hline
LoS Nakagami-$m$ factor & $m_{\text{LoS}}$ & $5$ \\ \hline
NLoS Nakagami-$m$ factor & $m_{\text{NLoS}}$ & $1$ (Rayleigh) \\ \hline
\end{tabular}%
}
\end{table}

\section{Numerical Results and Discussion}
In this section, we present numerical results to validate our analytical framework, quantify the performance gains of the FAS, and investigate the proposed EE optimization. Unless specified otherwise, the simulation parameters for the urban scenario are listed in Table \ref{tab:sim_params_urban}.

\begin{figure}[t]
\centering
\subfloat[Validation of the analytical framework for the urban scenario. Parameters: $L=100, N=2, W=0.5\lambda, m_{\text{LoS}}=5, m_{\text{NLoS}}=1, P_1=40$ dBm.]{%
    \includegraphics[width=0.45\columnwidth]{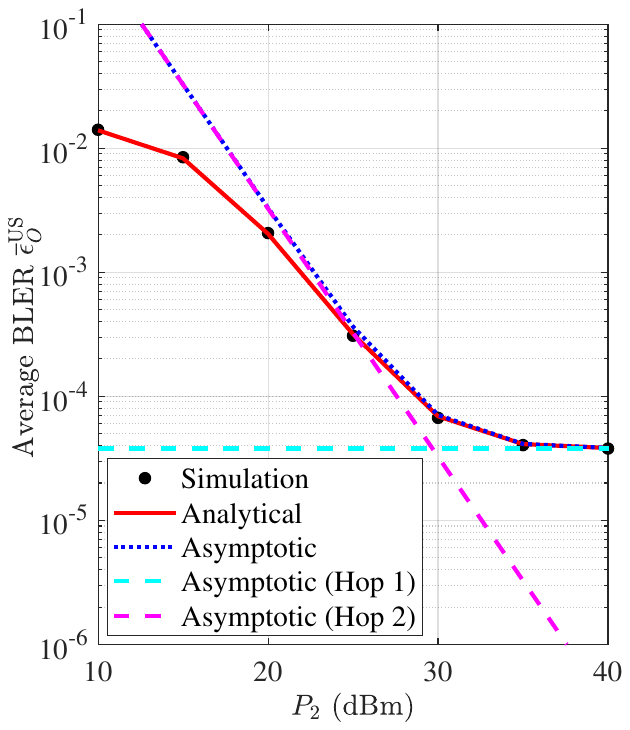}
    \label{fig:validation_urban}
}
\hfil
\subfloat[Performance comparison between FAS ($N=2$) and FPA ($N=1$). Parameters are identical to those in (a).]{%
    \includegraphics[width=0.45\columnwidth]{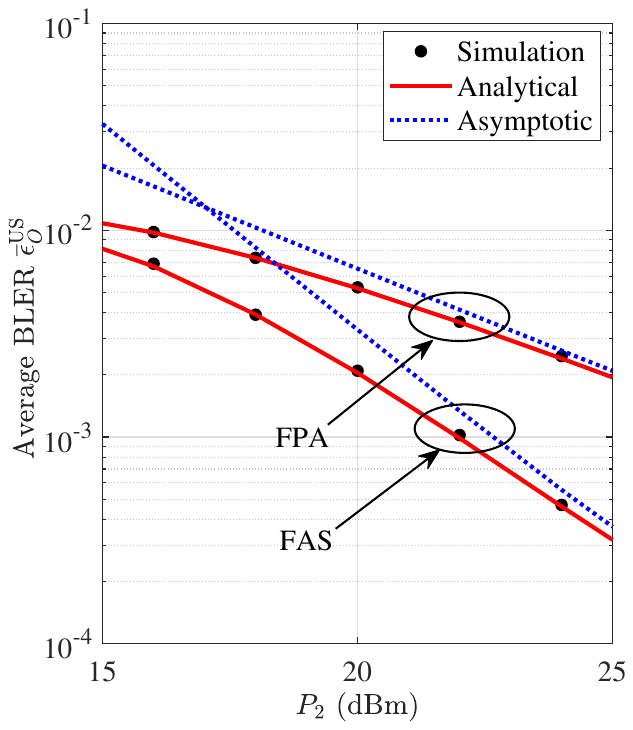}
    \label{fig:fas_vs_fpa_urban}
}
\caption{BLER performance: (a) Analytical validation and (b) Baseline comparison.}
\label{fig:validation_and_fpa}
\vspace{-3mm} 
\end{figure}

\begin{figure}[t]
\centering
\subfloat[End-to-end BLER versus FAS aperture size $W$ for a fixed $N=8$. Parameters: $L=100, m_{\text{LoS}}=5, m_{\text{NLoS}}=1, P_1=40$ dBm.]{%
    \includegraphics[width=0.45\columnwidth]{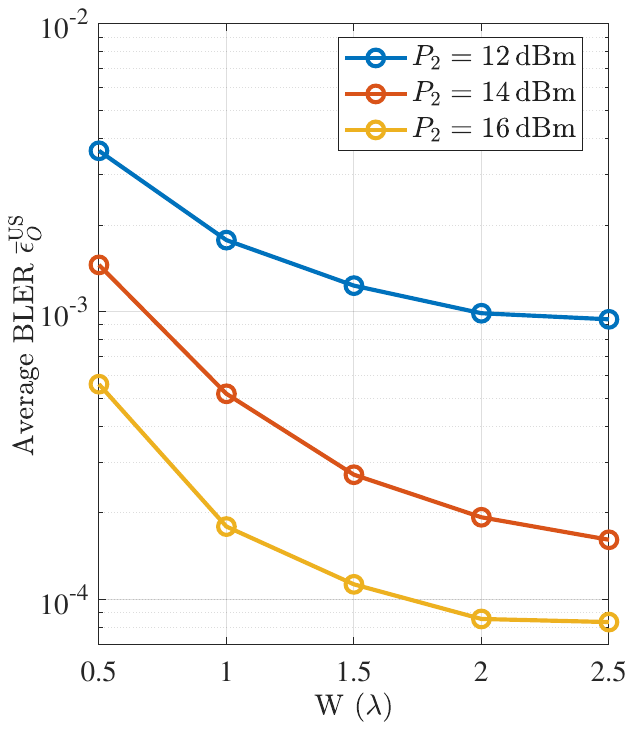}
    \label{fig:bler_vs_w_urban}
}
\hfil
\subfloat[Minimum UAV transmit power $P_2^*$ versus UAV altitude $Z_U$ for $\epsilon_{\text{th}}=10^{-3}$, $L=200$, and $P_1=46$~dBm.]{%
    \includegraphics[width=0.45\columnwidth]{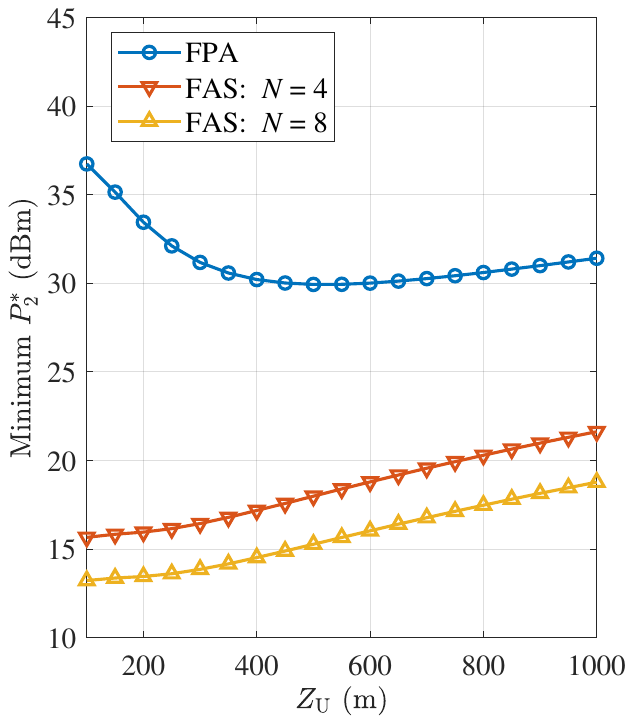}
    \label{fig:power_vs_alt_urban}
}
\caption{BLER vs. aperture size and UAV power vs. altitude.}
\label{fig:bler_and_power}
\vspace{-3mm} 
\end{figure}

\begin{figure}[t]
\centering
\subfloat[EE versus the number of FAS ports $N$ for different blocklengths.]{%
    \includegraphics[width=0.45\columnwidth]{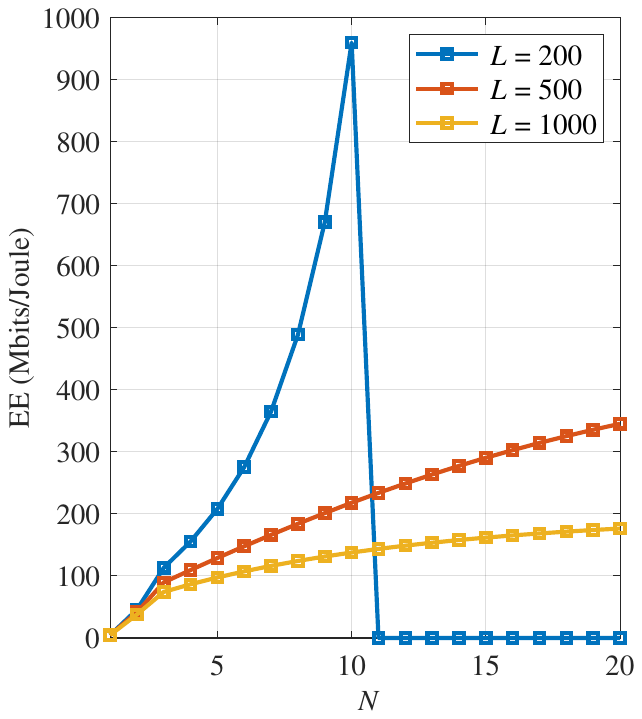}
    \label{fig:ee_vs_n_urban}
}
\hfil
\subfloat[Maximum achievable EE contours over $(Z_U, L)$, highlighting the global optimum (red star).]{%
    \includegraphics[width=0.49\columnwidth]{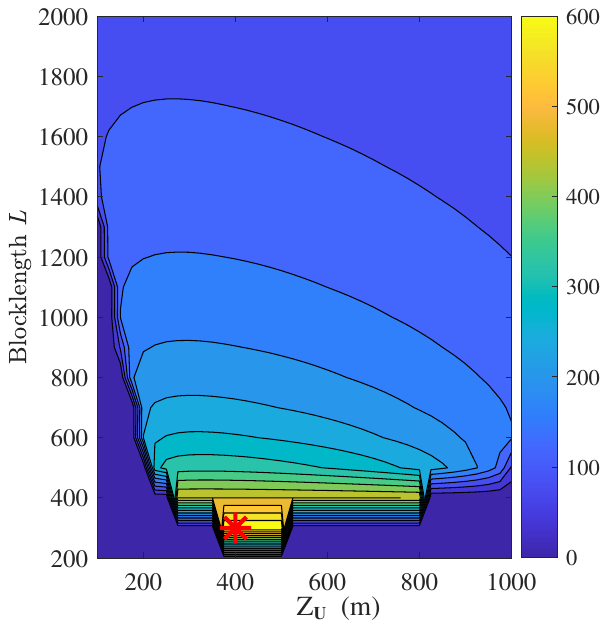}
    \label{fig:ee_contour_urban}
}
\caption{EE analysis and global optimization.}
\label{fig:ee_analysis}
\vspace{-3mm} 
\end{figure}

Fig. \ref{fig:validation_urban} validates our analytical framework. The analytical curve perfectly matches the Monte Carlo simulation points, confirming the accuracy of our derived BLER expressions for the complex probabilistic urban scenario. The high-SNR asymptotic analysis also correctly captures the diversity slope. Notably, an error floor appears as the system becomes bottlenecked by the fixed first-hop link, which aligns with the theoretical behavior of a DF relay.

Fig. \ref{fig:fas_vs_fpa_urban} quantifies the performance advantage of FAS. Even a minimal 2-port FAS provides a substantial power gain over the conventional FPA scheme. The steeper slope of the FAS curve confirms its higher diversity order, demonstrating that FAS can combat fading more effectively in the challenging urban environment. For instance, to achieve a BLER of $10^{-4}$, FAS requires approximately 3~dB less power than FPA.

Fig. \ref{fig:bler_vs_w_urban} investigates the impact of the FAS aperture size $W$ for a fixed $N=8$. The BLER decreases as $W$ increases because a larger aperture reduces spatial correlation, enhancing the effective diversity gain. However, this improvement exhibits diminishing returns. As $W$ becomes sufficiently large (e.g., $W > 2\lambda$), the channels decorrelate, and further increasing the aperture yields only marginal gains. This saturation effect reveals a key design insight: an aperture must be large enough for diversity, but an excessively large footprint is unnecessary.

Fig. \ref{fig:power_vs_alt_urban} investigates the minimum UAV transmit power $P_2^*$ required to meet the target BLER as a function of altitude $Z_U$. FAS substantially outperforms the FPA, achieving over 15~dB of power savings at the optimal altitude. Critically, the urban scenario exhibits a distinct convex trend, revealing an optimal altitude at $Z_U^* \approx 450$~m. This demonstrates the fundamental trade-off between avoiding NLoS blockage (favoring high altitudes) and minimizing path loss (favoring low altitudes).

Fig. \ref{fig:ee_vs_n_urban} presents the central finding of this paper. It plots the maximized EE as a function of $N$, revealing a clear quasi-concave relationship. Initially, increasing $N$ enhances diversity, which significantly reduces the required $P_2^*$ and boosts EE. However, beyond an optimal number of ports, the linear increase in time and energy overhead from port selection begins to outweigh the diminishing diversity gains, causing the EE to decrease. This confirms the critical trade-off captured by our realistic model. Furthermore, for the low-latency case of $L=200$, the EE abruptly drops to zero at $N=10$ due to the causality constraint $N\tau_p < L/W_{\text{band}}$.

To visualize the global optimization landscape, Fig. \ref{fig:ee_contour_urban} depicts the maximum achievable EE as a function of $Z_U$ and $L$, where $N$ is optimized pointwise. The contour map reveals that the optimal operating point, denoted by a star, corresponds to the minimum blocklength $L^* = 300$ and an altitude of $Z_U^* \approx 400$~m. This result corroborates the proposed framework which jointly addresses the interplay among blockage probability, path loss, and FAS overheads.

\section{Conclusion}
This paper established a foundational framework for analyzing UAV-enabled short-packet communication by leveraging FAS-assisted UAV relaying. In the short-packet regime, we derived novel closed-form BLER expressions for probabilistic NLoS urban environments. Our high-SNR asymptotic analysis quantified the fundamental diversity order and identified the first-hop link as the performance bottleneck. Building on this analysis, we optimized system EE by formulating a novel problem that incorporates the practical time and energy overheads of FAS port selection. This revealed a key trade-off between diversity gain and operational cost, leading to two key insights: the existence of an optimal, finite number of FAS ports, and an optimal UAV altitude governed by the balance between blockage and path loss. This work provides valuable design insights for reliable, energy-efficient UAV systems and can be extended to multi-user scenarios or the impact of imperfect CSI.

\begin{appendices}
\section{Proof of Lemma \ref{lema1_urban}}\label{plema1_urban_proof}
The power gain $Y = |g|^2$ of a Nakagami-$m$ channel (with $m_1$ and $\Omega=1$) follows a Gamma distribution. Its probability density function (PDF) is $f_Y(y) = m_1^{m_1} y^{m_1-1} e^{-m_1 y} / \Gamma(m_1)$. The cumulative distribution function (CDF) is found by integration as
\begin{equation}
F_Y(y) = \int_0^y f_Y(t) dt = \frac{1}{\Gamma(m_1)} \int_0^{m_1 y} u^{m_1-1} e^{-u} du,
\end{equation}
which, by definition, is $F_Y(y) = \gamma(m_1, m_1 y) / \Gamma(m_1)$, where $\gamma(\cdot,\cdot)$ is the lower incomplete gamma function. For integer $m_1$, this is equivalent to $F_Y(y) = 1 - e^{-m_1 y} \sum_{k=0}^{m_1-1} \frac{(m_1 y)^k}{k!}$.

The instantaneous SNR for link type $k$ is $\gamma_1^k = \bar{\gamma}_1^k(\theta) Y$. The CDF of $\gamma_1^k$ is found by scaling, where
\begin{align}
F_{\gamma_1^k}(x) &= \Pr(\bar{\gamma}_1^k(\theta) Y \le x) = F_Y\left(\frac{x}{\bar{\gamma}_1^k(\theta)}\right) \nonumber \\
&= 1 - e^{-\frac{m_1 x}{\bar{\gamma}_1^k(\theta)}} \sum\nolimits_{k=0}^{m_1-1} \frac{1}{k!}\left(\frac{m_1 x}{\bar{\gamma}_1^k(\theta)}\right)^k.
\end{align}
Substituting the definition $\vartheta_1^k(\theta) = m_1 / \bar{\gamma}_1^k(\theta)$ completes the proof.

\section{Proof of Lemma \ref{lema2_urban}}\label{plema2_urban_proof}
Let $Y_{\max} \triangleq |h_{\rm FAS}|^2$ be the selected channel power gain from \eqref{eq:analytical_model_def}, and let $Y_n \triangleq \lambda_n|g_n|^2$. Since the base components $|g_n|^2$ are i.i.d., the $Y_n$ are independent. The CDF of the maximum is the product of the individual CDFs as
\begin{equation}
F_{Y_{\max}}(y) = \Pr(\max_n Y_n \le y) = \prod_{n=1}^{N_{\rm eff}} F_{Y_n}(y).
\end{equation}
From Appendix \ref{plema1_urban_proof}, the CDF of a single Nakagami-$m$ power gain $|g_n|^2$ (with parameter $m_2^k$) is $F_{|g_n|^2}(y) = \gamma(m_2^k, m_2^k y) / \Gamma(m_2^k)$. By scaling, the CDF of $Y_n = \lambda_n |g_n|^2$ is
\begin{equation}
F_{Y_n}(y) = F_{|g_n|^2}(y/\lambda_n) = \frac{\gamma(m_2^k, m_2^k y / \lambda_n)}{\Gamma(m_2^k)}.
\end{equation}
Substituting this back into the product gives
\begin{equation}
F_{|h_{\rm FAS}|^2}(y) = \prod_{n=1}^{N_{\rm eff}} \frac{\gamma\left(m_2^k, \frac{m_2^k y}{\lambda_n}\right)}{\Gamma(m_2^k)}.
\end{equation}
The instantaneous SNR is $\gamma_2^k = \overline{\gamma}_2^k(\theta) |h_{\rm FAS}|^2 / (\sum_{n=1}^{N_{\rm eff}}\lambda_n)$. Its CDF is found by scaling $F_{|h_{\rm FAS}|^2}(y)$ with $y = x (\sum_{n=1}^{N_{\rm eff}}\lambda_n) / \overline{\gamma}_2^k(\theta)$. Using the definition $\vartheta_2^k(\theta) = m_2^k \sum_{n=1}^{N_{\rm eff}}\lambda_n / \overline{\gamma}_2^k(\theta)$, the argument of the gamma function becomes
\begin{equation}
\frac{m_2^k y}{\lambda_n} = \frac{m_2^k}{\lambda_n} \left( \frac{x \sum_{n=1}^{N_{\rm eff}}\lambda_n}{\overline{\gamma}_2^k(\theta)} \right) = \frac{x}{\lambda_n} \left( \frac{m_2^k \sum_{n=1}^{N_{\rm eff}}\lambda_n}{\overline{\gamma}_2^k(\theta)} \right) = \frac{x \vartheta_2^k}{\lambda_n}.
\end{equation}
Thus, $F_{\gamma_2^k}(x) = \prod_{n=1}^{N_{\rm eff}} \gamma(m_2^k, x \vartheta_2^k / \lambda_n) / \Gamma(m_2^k)$. For integer $m_2^k$, this is identical to the expression in Lemma \ref{lema2_urban}.

\section{Proof of Theorem \ref{theo1_urban}}\label{proof_theo1_urban}
In the high-SNR regime ($x \to 0$), the argument $z = x \vartheta_2^k / \lambda_n \to 0$. The CDF of a single branch, $F_{Y_n}(z) = \gamma(m_2^k, z) / \Gamma(m_2^k)$, can be approximated by its first series term as
\begin{equation}
\frac{\gamma(s, z)}{\Gamma(s)} = \frac{1}{\Gamma(s)} \int_0^z t^{s-1} e^{-t} dt \approx \frac{1}{\Gamma(s)} \frac{z^s}{s} = \frac{z^s}{\Gamma(s+1)}.
\end{equation}
Applying this to the product CDF $F_{\gamma_2^k}(x)$, which is
\begin{equation}
F_{\gamma_2^k}(x) = \prod_{n=1}^{N_{\rm eff}} \frac{\gamma\left(m_2^k, \frac{x \vartheta_2^k}{\lambda_n}\right)}{\Gamma(m_2^k)}. 
\end{equation}
Resort to the approximation $\gamma(m_2^k, z) \approx z^{m_2^k} / m_2^k$, we have
\begin{align}
F_{\gamma_2^k}(x) &\approx \prod_{n=1}^{N_{\rm eff}} \frac{1}{\Gamma(m_2^k+1)} \left(\frac{x \vartheta_2^k}{\lambda_n}\right)^{m_2^k} \nonumber \\
&= \underbrace{\left( \frac{(\vartheta_2^k)^{m_2^k}}{\Gamma(m_2^k+1)} \right)^{N_{\rm eff}} \left( \prod_{n=1}^{N_{\rm eff}} \lambda_n^{-m_2^k} \right)}_{\mathcal{C}_k} x^{m_2^k N_{\rm eff}}.
\end{align}
Let $F_{\gamma_2^k}(x) \approx \mathcal{C}_k x^{m_2^k N_{\rm eff}}$, where $\mathcal{C}_k$ is the constant defined above. We substitute this approximation into the average BLER integral \eqref{epaprir_urban}, yielding
\begin{align}
\bar{\epsilon}^k_2 &\approx \chi \int_{\rho_L}^{\rho_H} \mathcal{C}_k x^{m_2^k N_{\rm eff}} dx
= \chi \mathcal{C}_k \left[ \frac{x^{m_2^k N_{\rm eff} + 1}}{m_2^k N_{\rm eff} + 1} \right]_{\rho_L}^{\rho_H} \nonumber \\
&= \frac{\chi \mathcal{C}_k}{m_2^k N_{\rm eff} + 1} \left( \rho_H^{m_2^k N_{\rm eff} + 1} - \rho_L^{m_2^k N_{\rm eff} + 1} \right).
\end{align}
Substituting the full expression for $\mathcal{C}_k$ back gives the result in Theorem \ref{theo1_urban}. This completes the proof.
\end{appendices}


\begin{thebibliography}{99}

\bibitem{saad2019vision}
W.~Saad, M.~Bennis, and M.~Chen, ``A vision of {6G} wireless systems:
 Applications, trends, technologies, and challenges,'' \emph{IEEE Netw.},
 vol.~34, no.~3, pp. 134--142, May 2020.
 
\bibitem{she2017radio}
C.~She, C.~Yang, and T.~Q.~S. Quek, ``Radio resource management for
 ultra-reliable and low-latency communications,'' \emph{IEEE Commun. Mag.},
 vol.~55, no.~6, pp. 72--78, Jun. 2017.

\bibitem{polyanskiy2010channel}
Y.~Polyanskiy, H.~V. Poor, and S.~Verdú, ``Channel coding rate in the finite
 blocklength regime,'' \emph{IEEE Trans. Inf. Theory}, vol.~56, no.~5, pp.
 2307--2359, May 2010.

\bibitem{wu2021overview}
Y.~Zeng, Q.~Wu, and R.~Zhang, ``Accessing from the sky: A tutorial on {UAV}
 communications for {5G} and beyond,'' \emph{Proc. IEEE}, vol. 107, no.~12,
 pp. 2327--2375, Dec. 2019.

\bibitem{yuan2022per}
L. Yuan, N. Yang, F. Fang, and Z. Ding, ``Performance analysis of UAV-assisted short-packet cooperative communications," \emph{IEEE Trans. Veh. Technol.}, vol. 71, no. 4, pp. 4471-4476, Apr. 2022.

\bibitem{new2025a}
X. Zhu, F. R. Ghadi, T. Wu, K. Meng, C. Wang, and G. Zhou, ``On the Fundamental Scaling Laws of Fluid Antenna Systems,"
\emph{arXiv preprint}, arXiv:2511.03415, 2025.

\bibitem{wong2021fluid}
K. K. Wong, A.~Shojaeifard, K.-F. Tong, and Y.~Zhang, ``Fluid antenna systems,''
 \emph{IEEE Trans. Wireless Commun.}, vol.~20, no.~3, pp. 1950--1968, Mar.
 2021.

\bibitem{wong2020pel}
X. Zhu, K. K. Wong, B. Tang, W. Chen, and C. B. Chae, ``Fluid reconfigurable intelligent surface (FRIS) enabling secure wireless communications,"
\emph{arXiv preprint}, arXiv:2511.15860, 2025.

\bibitem{wong2022fama}
K. K. Wong and K.-F. Tong, ``Fluid antenna multiple access," \emph{IEEE Trans. Wireless Commun.}, vol. 21, no. 7, pp. 4801-4815, Jul. 2022.

\bibitem{wong2024compact}
K. K. Wong, C.-B. Chae, and K. F. Tong, ``Compact ultra massive antenna array: A simple open-loop massive connectivity scheme,” \emph{IEEE Trans. Wireless Commun.}, vol. 23, no. 6, pp. 6279-6294, Jun. 2024.

\bibitem{cha2025uav} 
X. Zhu, K. K. Wong, Q. Wu, H. Shin, and Y. Zhang, ``Fluid Antenna System-Enabled UAV-to-Ground Communications,"
\emph{arXiv preprint}, arXiv:2511.17416, 2025.

\bibitem{li2025radi}
Z. Li, Z. Gao, B. Ning, and Z. Wang, ``Radiation pattern reconfigurable FAS-empowered interference-resilient UAV communication," \emph{IEEE J. Sel. Areas Commun.}, early access, doi: 10.1109/JSAC.2025.3617928.

\bibitem{xu2025fluid}
X. Xu, H. Xu, H. Yu, Y. Liu, and M. Chen, ``Fluid antenna system (FAS)-assisted 3D UAV positioning performance optimization," in \emph{Proc. IEEE Int. Conf. Commun.}, Montreal, QC, Canada, 2025, pp. 2260-2265.


\bibitem{zhu2025fas}
X. Zhu, et al., ``Fluid antenna systems: A geometric approach to error probability and fundamental limits," \emph{arXiv preprint}, arXiv:2509.08815, 2025.

\bibitem{hour2014opt}
A. Al-Hourani, S. Kandeepan, and S. Lardner, ``Optimal LAP altitude for maximum coverage," \emph{IEEE Wireless Commun. Lett.}, vol. 3, no. 6, pp. 569-572, Dec. 2014.



\end{thebibliography}
\end{document}